\newcommand{\bigOh}[1]{\mathcal{O}\mathopen{}\mathclose\bgroup\left( #1 \aftergroup\egroup\right)}
\newcommand{\bigO}[1]{\mathcal{O}(#1)}
\newtheorem{theorem}{Theorem}
\newtheorem{lemma}{Lemma}
\newtheorem{corollary}{Corollary}[section]
\newtheorem{conjecture}{Conjecture}
\newtheorem{claim}{Claim}
\newtheorem{problem}{Problem}
\theoremstyle{definition}
\title{\textbf{Towards Cereceda's conjecture \\for planar graphs}}
\author{Eduard Eiben \quad Carl Feghali  \\ Department of Informatics, \\ University of Bergen, \\ Bergen, Norway \\ {\small \texttt{eduard.eiben@uib.no} \quad \texttt{carl.feghali@uib.no}}}
\date{}
\begin{document}   

\maketitle

\begin{abstract}
The reconfiguration graph $R_k(G)$ of the $k$-colourings
of a graph~$G$ has as vertex set the set of all possible $k$-colourings
of $G$ and two colourings are adjacent if they differ on the colour of exactly
one vertex. 

Cereceda conjectured ten years ago that, for every $k$-degenerate graph $G$ on $n$ vertices, $R_{k+2}(G)$ has diameter $\bigO{n^2}$. 
The conjecture is wide open, with a best known bound of $\bigO{k^n}$, even for planar graphs.
We improve this bound for planar graphs to $2^{\bigO{\sqrt{n}}}$. Our proof can be transformed into an algorithm that runs in $2^{\bigO{\sqrt{n}}}$ time. \end{abstract}

\section{Introduction}

Let $G$ be a graph, and let $k$ be a non-negative integer. 
A $k$-colouring of $G$ is a function $f: V(G) \rightarrow \{1, \dots, k\}$ such that $f(u) \not= f(v)$ whenever $(u, v) \in E(G)$. 
The reconfiguration graph $R_k(G)$ of the $k$-colourings of $G$ has as vertex set the set of all $k$-colourings of $G$ and two vertices of $R_k(G)$ 
are adjacent if they differ on the colour of exactly one vertex. Let $d$ be a positive integer. Then $G$ is said to be $d$-degenerate 
if every subgraph of $G$ contains a vertex of degree at most $d$. Expressed differently, $G$ is $d$-degenerate if there is an ordering $v_1, \dots, v_n$ of its vertices such that $v_i$ has at most $d$ neighbours $v_j$ with $j < i$.  

In the past decade, the study of reconfiguration graphs for graph colourings has been the subject of much attention. One typically asks whether the reconfiguration graph is connected. If so, what is its diameter and, in case it is not, what is the diameter of its connected components? See \cite{BJLPP14, CHJ06, CHJ06a} for some examples. Computational work has focused on deciding whether there is a path in the reconfiguration graph between a given pair of colourings \cite{BC09, CHJ06b, JKKPP14}. Other structural considerations of the reconfiguration graph have also been investigated in \cite{asplund, beier}. Reconfiguration graphs have also been studied for many other decision problems; see \cite{nishimura} for a recent survey. 

We remark that reconfiguration problems for graph colourings do not have known results for which the reconfiguration graph is connected but has a diameter that is not polynomial in the order of the graph. (In nearly all cases, the diameter turns out to be quadratic in the number of vertices.) On the other hand, the problem of deciding whether a pair of colourings are in the same component of the reconfiguration graph tends to be $\textsc{PSPACE}$-complete whenever the reconfiguration graph is disconnected. There are exceptions to this pattern such as, for example, deciding whether a pair of 3-colourings of a graph belong to the same component \cite{CHJ06b}. 

Given  a $d$-degenerate graph $G$, it is not difficult to show that $R_{d+2}(G)$ is connected \cite{jerrum}. The foregoing pattern motivated Cereceda \cite{luisthesis} to  conjecture that $R_{d+2}(G)$ has diameter that is quadratic in the order of $G$.

\begin{conjecture}\label{conj:cereceda}
Let $d$ be a positive integer, and let $G$ be a $d$-degenerate graph on $n$ vertices. Then $R_{d+2}(G)$ has diameter $\bigO{n^2}$.
\end{conjecture}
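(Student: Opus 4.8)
The plan is to reduce the diameter bound to a statement about reaching a single fixed target colouring. Fix a degeneracy ordering $v_1, \dots, v_n$, so that each $v_i$ has at most $d$ neighbours among $v_1, \dots, v_{i-1}$. Define the canonical colouring $\gamma$ greedily along this ordering: give $v_i$ the least colour in $\{1, \dots, d+1\}$ avoided by the already-coloured back-neighbours of $v_i$; since $v_i$ has at most $d$ back-neighbours, this is well-defined and never uses colour $d+2$. Because $R_{d+2}(G)$ is connected \cite{jerrum}, it suffices to prove that every $(d+2)$-colouring $\alpha$ can be transformed into $\gamma$ in $\bigO{n^2}$ recolourings: applying this to both endpoints of a given pair and concatenating the two reconfiguration sequences (one reversed) yields the desired $\bigO{n^2}$ diameter.

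To transform $\alpha$ into $\gamma$, I would process the vertices in increasing order while maintaining the invariant that $v_1, \dots, v_{i-1}$ already carry their canonical colours and are henceforth frozen. To set $v_i$ to $\gamma(v_i)$, first evacuate every forward-neighbour $v_j$ with $j > i$ currently coloured $\gamma(v_i)$ to some other colour, and only then recolour $v_i$ itself. The structural fact driving the evacuation is that the last vertex of any induced subordering has degree at most $d$, hence at least two free colours; the idea is to reserve colour $d+2$ as a spare ``buffer'' into which an obstructing vertex can always be pushed without disturbing the frozen prefix $v_1, \dots, v_{i-1}$.

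The cost of this procedure should be controlled by an amortized potential argument. I would attach to each vertex a polynomially bounded weight increasing with its index, define the potential of a colouring as the weighted count of vertices not yet at their canonical value, and attempt to charge each elementary recolouring so that the total number of moves telescopes to $\bigO{n^2}$; equivalently, one aims to show that no vertex is recoloured more than $\bigO{n}$ times over the entire sequence.

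The hard part will be controlling the cascade of forced recolourings. Evacuating a forward-neighbour $v_j$ of $v_i$ from the colour $\gamma(v_i)$ may itself require evacuating neighbours of $v_j$, and such chains can in principle re-trigger work on vertices already processed — which is precisely the phenomenon producing the known $\bigO{k^n}$ bound. The buffer-colour trick tames this only when a genuinely unused colour is available at each step, i.e.\ in the regime $k \ge 2d + \bigO{1}$ where standard arguments already yield quadratic diameter; but at the threshold $k = d+2$ there may be no spare colour, and a frozen forward-neighbourhood can simultaneously forbid every colour that an obstructing vertex could move to. Ruling out such long, self-re-triggering cascades, presumably by exhibiting a potential that strictly decreases under a carefully scheduled evacuation order, is the decisive obstacle, and is exactly the point at which \emph{Cereceda's conjecture remains open} in full generality.
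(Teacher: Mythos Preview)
The statement is Cereceda's conjecture, presented in the paper as an \emph{open} conjecture; the paper does not prove it, and its main result (Theorem~\ref{thm:planar}) is only a subexponential bound in the planar case. There is therefore no proof in the paper to compare your proposal against.

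Your outline is the standard attack: fix a degeneracy ordering, define the greedy canonical $(d+1)$-colouring $\gamma$, and try to recolour an arbitrary $\alpha$ to $\gamma$ by processing vertices left to right while evacuating obstructing forward-neighbours. You are right that this works when $k \ge 2d+1$ (this is exactly Lemma~\ref{lem:ce}), and you are equally right about where it breaks at $k = d+2$: there is no guaranteed spare colour, evacuating one vertex can force further evacuations, and such cascades are precisely what yields the known $\bigO{d^n}$ bound rather than a polynomial one. No potential function that tames these cascades is known, and your final paragraph says as much. So your proposal is not a proof and does not purport to be one; it is an accurate diagnosis of why the conjecture remains open, and nothing more can be expected here.
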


Conjecture \ref{conj:cereceda} has  resisted several efforts and has only been verified (other than for trees) for graphs of bounded treewidth \cite{bonamy13} and graphs with degeneracy at least $\Delta - 1$ where $\Delta$ denotes the maximum degree of the graph \cite{feghalibrooks}.

In the expectation of the difficulty of Conjecture \ref{conj:cereceda}, 
 Bousquet and Perarnau \cite{bousquet11} have shown that for every $d\geq 1 $ and $\epsilon > 0$ and every graph $G$ with maximum average degree $d - \epsilon$, the diameter of $R_{d + 1}(G)$ is $\bigO{n^c}$ for some constant $c = c(d, \epsilon)$; see \cite{feghali} for a short proof.  Their result in particular implied that the reconfiguration graph of $8$-colourings for planar graphs has diameter that is polynomial in the order of the graph.  Since planar graphs are $5$-degenerate, the one outstanding case of Conjecture \ref{conj:cereceda} restricted to planar graphs is thus $k = 7$ (aside, of course, from improving the constant term in the exponent of the diameter); see also \cite[Conjecture 16]{bonamy13}. 
On the other hand, the best known upper bound on the diameter in Conjecture \ref{conj:cereceda} is $\bigO{d^n}$ -- even for planar graphs -- and this follows from \cite{jerrum}. 
 In this note, we significantly improve this bound for planar graphs.

\begin{theorem}\label{thm:planar}
For every planar graph $G$ on $n$ vertices, $R_7(G)$ has diameter at most $2^{\bigO{\sqrt{n}}}$.
\end{theorem}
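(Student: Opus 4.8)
The natural strategy is divide-and-conquer using the planar separator theorem. A planar graph $G$ on $n$ vertices has a balanced separator $S$ of size $O(\sqrt{n})$ whose removal splits $G$ into parts $A$ and $B$, each of size at most $2n/3$. The idea is: to recolour from $f$ to $g$, first fix the colours on $S$ to some agreed-upon "canonical" colouring, then recurse independently on $G[A\cup S]$ and $G[B\cup S]$ while keeping $S$ frozen. The subtlety is that freezing $S$ forbids its colours, so inside $A$ we must reconfigure using only the colours not blocked by the neighbours of each vertex in $S$; since we have $7$ colours and planar graphs are $5$-degenerate, after deleting $S$ the graph $G[A]$ is still $5$-degenerate, and we need to argue that $5+2 = 7$ colours suffice to reconfigure $G[A\cup S]$ with $S$ frozen. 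More precisely, if we think of the vertices of $S$ as having pre-assigned colours, $G[A\cup S]$ with $S$ frozen behaves like a list-colouring reconfiguration instance on $G[A]$ where each vertex has a list of size at least $7 - (\text{number of neighbours in }S)$, which could be as small as $2$; this is where the argument needs care and I expect the \emph{list sizes versus degeneracy} bookkeeping to be the main obstacle.

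To handle this, I would first establish, as a lemma, that the canonical colouring of $S$ can be chosen and reached appropriately: given any colouring $f$ of $G$, one can reach in a bounded number of steps a colouring that agrees with a fixed canonical colouring on $S$, where the number of steps is governed by the recursion on the two sides. The cleanest route is probably to recurse on a slightly different quantity — e.g., prove by induction on $n$ that any two $7$-colourings of a planar graph are connected in $R_7(G)$ by a path of length $D(n)$, where $D(n) \le 2\,D(2n/3) + (\text{cost of fixing the }O(\sqrt n)\text{ separator})$. If fixing the separator costs $2^{O(\sqrt n)}$ steps (which it will, since recolouring a vertex of $S$ may cascade), the recurrence $D(n) \le 2D(2n/3) + 2^{O(\sqrt n)}$ solves to $D(n) = 2^{O(\sqrt n)}$ because the $2^{O(\sqrt n)}$ term dominates the geometric sum: $\sum_i 2^i \cdot 2^{O(\sqrt{(2/3)^i n})}$ is a convergent-type sum bounded by $2^{O(\sqrt n)}$.

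The hard part is genuinely the step of moving the separator to its canonical colours while the rest of the graph is in an arbitrary state — one cannot freeze $A$ and $B$ and independently fix $S$, because $S$ is exactly the interface. I would handle this by iterating over the vertices of $S$ one at a time in a degeneracy order of $G$: to recolour a single vertex $v \in S$ to its target colour $c$, if $c$ is currently used by a neighbour $w$, recursively recolour the component structure around $w$ to free up colour $c$; since each such sub-recolouring is itself a reconfiguration instance on a planar graph with fewer "active" vertices, one gets a recursive bound. Bounding this carefully so that the total is $2^{O(\sqrt n)}$ rather than exponential is the crux; the key inequality to exploit is that $S$ has only $O(\sqrt n)$ vertices, so even if fixing each one costs a factor that is exponential in the size of the current subproblem, the sizes shrink geometrically and the $O(\sqrt n)$ count of separator vertices contributes only a polynomial multiplicative overhead. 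Finally, I would remark that every step in this argument — computing the separator, choosing canonical colourings, and executing the recolouring sequence — is constructive, which yields the claimed $2^{O(\sqrt n)}$-time algorithm.
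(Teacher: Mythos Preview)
Your separator-based approach is natural, but the recursion has a genuine gap. Once you freeze the separator $S$ and recurse on $G[A\cup S]$, you are no longer in the same problem: the induction hypothesis bounds the diameter of $R_7(H)$ for planar $H$, i.e.\ the length of \emph{some} path between two $7$-colourings, and that path may well recolour vertices of $S$ along the way. To recurse with $S$ held fixed you would need a statement about list-colouring reconfiguration on planar graphs where lists can be as small as $2$ (as you yourself observe), and no such statement is available --- with lists of size $2$ the reconfiguration graph is typically disconnected. Your sketch names this obstacle but does not overcome it.

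The second gap is the cost of moving $S$ to its canonical colouring. Your cascading argument (``to recolour $v\in S$ to $c$, recursively free $c$ at a neighbour $w$'') carries no size reduction: $w$ lives in all of $G$, and freeing $c$ at $w$ triggers another cascade of the same kind. This is precisely the recursion behind the trivial $O(5^n)$ bound, and nothing in the sketch explains why it should terminate in $2^{O(\sqrt n)}$ rather than $2^{O(n)}$ steps; the assertion that ``sizes shrink geometrically'' is not substantiated. For comparison, the paper does not use separators at all: it first shows, via an AM--GM computation exploiting $\operatorname{mad}(G)<6$, that any $7$-colouring can be recoloured to some $5$-colouring in $2^{O(\sqrt n)}$ steps, and then connects any two $5$-colourings in polynomially many steps by partitioning $V(G)$ into two independent sets plus a $2$-degenerate part and invoking the known $O(n^2)$ diameter bound for $R_5$ of $2$-degenerate graphs.
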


\section{Proof of Theorem \ref{thm:planar}} 

In this section, we prove Theorem \ref{thm:planar}. We begin with the following three lemmas. In the first lemma, we obtain a crude bound on the number of recolourings required for degenerate graphs to reduce the number of colours by one. 

\begin{lemma}\label{lem:algorithm}
Let $k \geq 1$, and let $G$ be a $k$-degenerate graph on $n$ vertices. Let $\{u_1, \dots, u_s\}$ be the set of vertices of $G$ of degree at least $k +2$. If $\alpha$ is a $(k+2)$-colouring of $G$, then we can recolour $\alpha$ to some $(k+1)$-colouring of $G$ by at most $
\bigO{n^2\prod_{i=1}^s \text{deg}(u_i)}$
recolourings. 
\end{lemma}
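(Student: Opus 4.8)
The plan is to argue by induction on $n$, using the $k$-degeneracy ordering together with the standard idea (going back to Jerrum's argument that $R_{d+2}(G)$ is connected) of freeing up a colour on a low-degree vertex and then recolouring it last. Let $v_n$ be a vertex of degree at most $k$ in $G$, which exists since $G$ is $k$-degenerate, and set $G' = G - v_n$. Since $G'$ is also $k$-degenerate and its set of vertices of degree at least $k+2$ is a subset of $\{u_1,\dots,u_s\}$, the inductive hypothesis gives a recolouring sequence transforming $\alpha$ restricted to $G'$ into a $(k+1)$-colouring of $G'$ using at most $\bigO{(n-1)^2\prod \text{deg}_{G'}(u_i)}$ steps. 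The point is to lift this sequence of recolourings of $G'$ to a sequence of recolourings of $G$: whenever the sequence wants to recolour a neighbour $u$ of $v_n$ to a colour $c$ currently used by $v_n$, we first recolour $v_n$ to some colour not in $\{1,\dots,k+2\}\setminus\{c\}$ forbidden by its (at most $k$) neighbours — such a colour exists because $v_n$ sees at most $k+1$ forbidden colours among the $k+2$ available — then perform the recolouring of $u$, and continue. At the very end $v_n$ may be using colour $k+2$; if so, recolour it one last time to a colour in $\{1,\dots,k+1\}$ avoided by its neighbours, which again exists since $v_n$ has at most $k$ neighbours.

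The accounting is the heart of the bound. Each original recolouring step of $G'$ is replaced by at most two steps in $G$ (one for $v_n$, one for the intended vertex), so the length only doubles — but that alone would give $2^{\Theta(n)}$, not a polynomial factor, so the induction must be set up more carefully. The right way is to not peel off vertices one at a time all the way down, but rather to observe that $v_n$ needs to be "refreshed" only when one of its neighbours is recoloured, and to bound the number of such events directly in terms of the colouring sequence on $G'$ rather than re-deriving it recursively. Concretely, I would instead prove the statement by a single global argument: order the vertices $v_1,\dots,v_n$ by degeneracy, process them from $v_n$ down to $v_{s'+1}$ (the low-degree vertices), and show that each low-degree vertex $v_j$ gets recoloured at most $O(n)$ times per "pass", interleaved with the at most $\prod_{i=1}^s \text{deg}(u_i)$-many configurations of the colours on the high-degree vertices $u_1,\dots,u_s$ and their neighbourhoods. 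Since the colouring of $G$ is determined once we fix the colours on $\{u_1,\dots,u_s\}$ together with a bounded-depth propagation, and there are at most $\prod_{i=1}^s \text{deg}(u_i)$ relevant such patterns up to the freedom we exploit, a BFS/DFS over this state space of size $\bigO{n \prod_{i=1}^s \text{deg}(u_i)}$ — each state connected to $\bigO{n}$ others — yields a path of length $\bigO{n^2 \prod_{i=1}^s \text{deg}(u_i)}$.

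So the cleanest route is: (i) contract or fix the high-degree vertices, (ii) observe the remaining graph restricted to any fixed choice of colours on the $u_i$ is $k$-degenerate with all degrees at most $k+1$, for which one recolours the extra colour away greedily in $\bigO{n}$ steps (this is the standard "$R_{k+1}$ is connected for graphs of maximum degree $k+1$"-type bound, linear in $n$), and (iii) connect the $\bigO{\prod \text{deg}(u_i)}$-many resulting "skeleton colourings" by a spanning walk, paying $\bigO{n}$ recolourings to move between neighbouring skeletons and $\bigO{n \prod \text{deg}(u_i)}$ skeletons in total, for $\bigO{n^2 \prod \text{deg}(u_i)}$ overall.

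The main obstacle I expect is step (iii): making precise what the "skeleton" state space is and why moving the colour of a single $u_i$ (which may have large degree) costs only $\bigO{n}$ recolourings rather than something depending on $\text{deg}(u_i)$ inside the induction, so that the product $\prod_{i=1}^s \text{deg}(u_i)$ appears only once, as a multiplicative count of states, and not raised to a power of $n$. Equivalently: one must avoid the naive recursion that doubles the length at each of $n$ peeling steps. I would handle this by never recursing on the whole vertex set, but only on the set $\{u_1,\dots,u_s\}$ of high-degree vertices — recolouring $u_i$ means first neutralising the colour we want on $u_i$ at each of its $\text{deg}(u_i) \le n$ neighbours (each neighbour being low-degree within the current skeleton, hence cheaply recolourable), and this is exactly where the factor $\text{deg}(u_i)$ in the statement enters, bounded crudely by $n$ where convenient to absorb it into the $\bigO{n^2}$.
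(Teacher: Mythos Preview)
Your proposal has a genuine gap at step~(ii). You claim that once the colours on $u_1,\dots,u_s$ are fixed, the remaining vertices can be recoloured greedily in $O(n)$ steps because ``all degrees are at most $k+1$''. But a vertex of degree exactly $k+1$ may be frozen: with $k+2$ colours, its closed neighbourhood can use every colour exactly once, so it cannot move until a neighbour does. There is no ``$R_{k+1}$ is connected for maximum degree $k+1$'' result to cite --- $R_{k+2}(K_{k+2})$ is already disconnected --- and your skeleton/BFS picture collapses: the low-degree part is not freely recolourable over a fixed skeleton, the ``states'' are not well-defined, and transitions between them are not $O(n)$. You correctly diagnose in your last paragraph that the danger is the product $\prod_i\deg(u_i)$ being raised to a power, but your proposed fix does not address the degree-$(k+1)$ vertices, which are the whole difficulty.

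The paper's argument never freezes anything. It defines a recursive procedure $\textsc{recolour}(h)$ along a $k$-degenerate ordering $v_1,\dots,v_n$: to recolour $v_h$, if some colour is free on $N[v_h]$ use it; if $\deg(v_h)=k+1$, then every colour appears exactly once on $N[v_h]$, so call $\textsc{recolour}$ on the (necessarily later) last neighbour of $v_h$ in $\sigma$ and then take its old colour; if $\deg(v_h)\geq k+2$, pick a colour $c$ avoiding $v_h$ and its at most $k$ earlier neighbours, call $\textsc{recolour}$ on each later neighbour currently coloured $c$, then recolour $v_h$ to $c$. Recursive calls always go to strictly later indices, so the recursion tree has depth at most $n$; it branches only at the $u_i$, with branching factor at most $\deg(u_i)$, and since each $u_i$ lies on any root-to-leaf path at most once the tree has at most $\prod_i\deg(u_i)$ leaves and hence $O\big(n\prod_i\deg(u_i)\big)$ nodes. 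Applying $\textsc{recolour}$ at most $n$ times to eliminate colour $k+2$ gives the bound. The idea you are missing is precisely this treatment of degree-$(k+1)$ vertices: each one forces a \emph{single} recursive call forward in the ordering, contributing depth but no branching.
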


\begin{proof} 
We generalise the proof in \cite{feghalibrooks} by describing an algorithm that finds a sequence of recolourings from $\alpha$ to some $(k+1)$-colouring $\gamma$ of $G$ in $
\bigO{n^2\prod_{i=1}^s \text{deg}(u_i)}$
 time.

Let us fix a $k$-degenerate ordering $\sigma = v_1, \dots, v_n$ of $G$ and, without loss of generality, let $u_i$ appear before $u_j$ in $\sigma$ whenever $i<j$. 
In the following, we will describe an algorithm that given an index $h\in[n]$ outputs 
a sequence of recolourings with the following properties: 
\begin{itemize}
\item[(i)] for $i<h$, $v_i$ is not recoloured,
\item[(ii)] for $i> h$, $v_i$ is recoloured at most $\prod_{j=\ell}^s \text{deg}(u_j)$ times, where $u_\ell$ is the first degree at least $k+2$ vertex with index at least $h$ in $\sigma$, and
\item[(iii)] $v_h$ is recoloured once to a different colour.
\end{itemize}
Notice that the algorithm takes $\bigO{n\prod_{i=1}^s \text{deg}(u_i)}$ recolourings to recolour~$v_h$.
Hence, by repeatedly using such a sequence on the lowest index of a vertex coloured by the colour $k+2$, we can obtain the colouring $\gamma$ in which colour $k+2$ does not appear using $
\bigO{n^2\prod_{i=1}^s \text{deg}(u_i)}$ recolourings.

\newcommand{\recolor}{\textsc{recolour}}
Given $h\in [n]$ and a $k$-degenerate ordering $\sigma = v_1, \dots, v_n$ of $G$, the algorithm \recolor$(h)$ works as follows:

\begin{enumerate}
	\item If there is a colour $c$ that is not used on $v_h$ or any of its neighbours, then recolour $v_h$ to $c$ and terminate. 
	\item If $v_h$ has degree exactly $k+1$, let $v_{i}$ be the neighbour of $v_h$ that is latest in $\sigma$ and let $c$ be the colour of $v_i$. The algorithm first calls \recolor$(i)$ and then recolours $v_h$ to $c$. 
	\item If $v_h$ has degree at least $k+2$, then 
	\begin{enumerate}
		\item let $c$ be a colour not appearing on $v_h$ or any of its at most $k$ neighbours earlier in the ordering and
		\item let $v_{i_1},\ldots,v_{i_t}$ be the neighbours of $v_h$ later in the ordering with $i_1<i_2<\cdots<i_t$.
		\item For each $j\in[t]$ in the ascending order: if colour of $v_{i_j}$ is $c$ at this point, then call \recolor$(i_j)$.
		\item Recolour $v_h$ to $c$. 
	\end{enumerate}
\end{enumerate}

\noindent
We simultaneously prove the correctness and properties (i)--(iii) of the algorithm by induction on $n-h$. 
Clearly, whenever Step 1. of \recolor{} applies, we only recolour $v_h$ once to a colour not appearing on $v_h$ or any of its neighbours.
Moreover, if $h=n$, then, since $\sigma$ is a $k$-degenerate ordering, $v_h$ has degree at most $k$ and Step 1. again applies. 

For the induction step, let us assume that $h<n$, Step 1. of \recolor{} does not apply and for all $i>h$ the algorithm \recolor$(i)$ is correct and satisfies properties (i)--(iii). Since Step 1. does not apply, the degree of $v_h$ is at least $k+1$. We distinguish two cases. 

\begin{itemize}
	\item[Case 1:] $v_h$ has degree $k+1$. 
\end{itemize}
In this case \recolor{} applies Step 2. Since Step 1. does not apply, each colour appears either on $v_h$ or on one of its neighbours. Since there are $k+2$ colours, it follows that each colour appears precisely once in the closed neighbourhood of $v_h$. Note that since $\sigma$ is a $k$-degenerate ordering, the latest neighbour $v_i$ of $v_h$ is after $v_h$ in $\sigma$. Hence, properties (i) and (ii) follow from properties (i) and (ii) for \recolor$(i)$. Finally, the correctness and property (iii) follow from the fact that \recolor$(i)$ is, by induction, both correct and recolours the unique neighbour $v_i$ of $v_h$ of colour $c$ before recolouring $v_h$ to colour $c$. 

\begin{itemize}
	\item[Case 2:] $v_h$ has degree at least $k+2$. 
\end{itemize}
In this case \recolor{} applies Step 3. Again, the algorithm applies the recursive calls only on the vertices that are later in $\sigma$ than $v_h$ and hence property (i) is satisfied. After the execution of Steps (a)--(c), colour $c$ no longer appears on $v$ or any of its neighbours, after which $v_h$ is recoloured to $c$ at Step (d). So the algorithm is correct and property (iii) holds. To prove (ii), notice that the algorithm calls \recolor$(i)$ at most $\text{deg}(v_h)$ number of times and always for $i>h$. Hence, each vertex will get recoloured at most $\deg(v_h)\prod_{j=\ell+1}^s\deg(u_j)=\prod_{j=\ell}^s\deg(u_j)$, where $u_\ell=v_h$, and property (ii) follows. 
\end{proof}

The maximum average degree of a graph $G$ is defined as
\[
\operatorname{mad}(G) = \max\bigg\{\frac{\sum_{v \in V(H)} \deg(v)}{|V(H)|}: H \subseteq G\bigg\},
\]
By Euler's formula, the maximum average degree of a planar graph is strictly less than six.  By definition, if a graph has maximum average degree strictly less than $k$ for some positive integer $k$, then this graph is also $(k - 1)$-degenerate. 

In our next lemma, we show that we can reduce the number of colours by one using subexponentially many recolourings if we further assume our graph to have bounded maximum average degree.  

\begin{lemma}\label{lem:sub}
Let $k \geq 2$, and let $G$ be a a graph on $n$ vertices and with $\text{mad}(G) < k + 1$. If $\alpha$ is a $(k+2)$-colouring of $G$, then we can recolour $\alpha$ to some $(k+1)$-colouring of $G$ by $k^{\bigO{k^2\sqrt{n}}}$ recolourings.
\end{lemma}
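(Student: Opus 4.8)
The idea is to combine Lemma~\ref{lem:algorithm} with a small separator, exploiting that graphs of bounded maximum average degree have few vertices of high degree, so the $\prod_{i=1}^s \deg(u_i)$ factor can be made subexponential when applied to the right piece. Concretely: since $\operatorname{mad}(G) < k+1$, the number of vertices of degree at least $D$ is $O(n/D)$ for any threshold $D$, and moreover $G$ is $k$-degenerate. We want to recolour so that colour $k+2$ disappears. The natural route is divide-and-conquer on a balanced separator: a graph of bounded maximum average degree has $O(\sqrt{n})$-size separators if it is planar — but the lemma is stated for all bounded-mad graphs, so the separator must come from somewhere else. Re-reading, the intended use (Lemma~\ref{lem:sub} is only invoked for planar $G$ in the proof of Theorem~\ref{thm:planar}) is almost certainly via the planar separator theorem applied one level up; here the $\sqrt{n}$ in the exponent suggests the recursion depth/geometry is handled outside, and Lemma~\ref{lem:sub} itself should just be the ``few high-degree vertices'' bookkeeping. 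Let me reconsider what the cleanest self-contained statement proof is.

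\medskip

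\textbf{Approach via thresholding the high-degree vertices.} Set a threshold $D = \Theta(\sqrt{n})$ (or more precisely tune it at the end). Call a vertex \emph{big} if $\deg(v) \ge D$ and \emph{small} otherwise. Since $\operatorname{mad}(G) < k+1$, we have $\sum_v \deg(v) < (k+1)n$, so the number $s$ of big vertices satisfies $s \le (k+1)n/D = O(k\sqrt n)$. Now I would delete the big vertices to get $G' = G - \{u_1,\dots,u_s\}$, which is again a subgraph of $G$, hence $k$-degenerate and with $\operatorname{mad} < k+1$, but now \emph{every} vertex of $G'$ has degree at most $D+ (\text{something})$ — actually degree at most $\max(k, D-1)$ in $G'$ after removing the big set only removes edges, so $G'$ has maximum degree $< D$. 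Wait, that's not automatic either: a small vertex has degree $< D$ in $G$ already, so in $G'$ it still has degree $< D$; and the big vertices are gone. So $G'$ has maximum degree at most $D-1$, in particular it has no vertices of degree $\ge k+2$ that are ``expensive'' — no wait, it can still have many vertices of degree between $k+2$ and $D-1$. Hmm — applying Lemma~\ref{lem:algorithm} directly to $G'$ gives $O(n^2 \prod \deg) = O(n^2 D^{n})$, far too large. So deletion alone is not enough; I need the separator recursion.

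\medskip

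\textbf{Corrected plan: separator recursion using bounded mad.} The real engine must be: repeatedly find a small balanced separator, recurse on the two sides, and glue. For a graph with $\operatorname{mad}(G) < k+1$ one does \emph{not} in general have sublinear separators, so the $\sqrt n$ must be exploited differently — I now believe Lemma~\ref{lem:sub} is proved by a \emph{recursion on the number of vertices where we split off a bounded-degree-controllable chunk}, and the correct reading is that this lemma is applied to pieces of size $O(\sqrt n)$ that come from a separator decomposition performed \emph{in the proof of the theorem}, while Lemma~\ref{lem:sub} internally also recurses. Given the stated bound $k^{O(k^2\sqrt n)}$, the structure I would write is: (1) if $n \le $ some constant depending on $k$, brute force; (2) otherwise, pick the threshold $D = \lceil \sqrt n\rceil$, let $B$ be the big vertices, $|B| = s = O(k\sqrt n / \sqrt n)\cdot\sqrt n$… let me just say $s \le (k+1)\sqrt n$; (3) the graph $G - B$ has maximum degree $< D = O(\sqrt n)$, and crucially it is still $k$-degenerate, so by Lemma~\ref{lem:algorithm} applied with the big-vertex set of $G-B$ being \emph{empty} when $D \le k+2$ — but $D \gg k+2$, so that fails again.

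\medskip

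\textbf{What I'd actually do — honest version.} The bound $k^{O(k^2 \sqrt n)}$ with a clean $\sqrt n$ in the exponent strongly indicates a \emph{recursion tree of depth $O(\log n)$ where at each level the ``cost multiplier'' is $k^{O(k^2 \sqrt{n_i})}$ and $\sum \sqrt{n_i}$ over a level is $O(\sqrt n)$ by concavity} — i.e., the planar separator theorem \emph{is} being invoked, which means Lemma~\ref{lem:sub} as stated is slightly mislabeled and really needs planarity, or bounded mad is being used as a proxy because the lemma is only ever called on planar inputs. Assuming I may use that $G$ (being planar, as it is in every application) has a separator of size $O(\sqrt n)$ splitting it into parts of size $\le 2n/3$: (1) find such a separator $S$, $|S| = O(\sqrt n)$; (2) fix the colours on $S$ arbitrarily among the $(k+2)$-colourings — actually, first recolour $G[S]$ together with a bounded neighbourhood so that colour $k+2$ does not appear on $S$, using Lemma~\ref{lem:algorithm} on $G[N[S]]$ which has $O(\sqrt n)$ vertices, costing $O(n \cdot (\text{max deg})^{O(\sqrt n)}) = k^{O(k\sqrt n \log n)}$ — hmm the $\log n$ is a problem, so one uses that $G[N[S]]$ also has bounded mad and few high-degree vertices, say $O(\sqrt{\sqrt n})$ of them of degree $\le n$, giving $n^{O(\sqrt[4] n)}$, still not matching. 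This is exactly the delicate accounting I expect to be the crux.

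\medskip

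\textbf{The main obstacle.} The hard part, and where I'd spend all my effort, is making the degree product $\prod \deg(u_i)$ from Lemma~\ref{lem:algorithm} come out subexponential: one must simultaneously (a) bound the \emph{number} of high-degree vertices (easy from $\operatorname{mad} < k+1$: at most $(k+1)n/D$ of degree $\ge D$), and (b) bound their \emph{degrees} (at most $n-1$ trivially), and balance $D$ so that $(n)^{(k+1)n/D}$ — no, that's $n^{O(n/D)}$ which needs $D = \Omega(n / \sqrt n \cdot \log n /\log n)$... Setting $D = \sqrt n$: number of big vertices $\le (k+1)\sqrt n$, each of degree $\le n$, product $\le n^{(k+1)\sqrt n} = 2^{O(k\sqrt n \log n)}$, and then $\prod \deg(u_i)$ is only over the big vertices \emph{of the subgraph we apply the lemma to}. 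So: apply Lemma~\ref{lem:algorithm} to all of $G$ directly! Its big vertices (degree $\ge k+2$) number at most $(k+1)n/(k+2) = O(n)$ — too many. Threshold at $D$: reroute via a two-level scheme — first handle small pieces between big vertices recursively, contributing the $\sqrt n$ via a separator among the small vertices whose components have bounded degree. I am fairly confident the intended proof is: \emph{induction on $n$; let $H$ be the subgraph induced by vertices of degree $< D:=\lceil\sqrt{n}\rceil+k$; $H$ has max degree $< D$ hence is $(D-1)$-degenerate and more importantly any planar-or-bounded-mad graph has a balanced separator, recurse} — and I will structure my writeup as: threshold $\to$ split into the $O(k\sqrt n)$ big vertices plus a bounded-degree remainder $\to$ recurse on the remainder with a separator $\to$ reassemble, using Lemma~\ref{lem:algorithm} only on pieces of $O(\sqrt n)$ vertices so that $\prod\deg \le (\sqrt n)^{O(\sqrt n)} = k^{O(k^2\sqrt n)}$ after absorbing $\log$ factors into the $k^{O(k^2\cdot)}$ slack. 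The precise tuning of $D$ and verification that the recursion multiplies out to exactly $k^{O(k^2\sqrt n)}$ rather than $k^{O(k^2\sqrt n \log n)}$ is the step I expect to fight with.
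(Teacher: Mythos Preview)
Your proposal has a genuine gap: you have convinced yourself that the $\sqrt{n}$ must come from a separator theorem, and hence that planarity is secretly required. Neither is true. Bounded-mad graphs have no sublinear separators in general (consider bounded-degree expanders), so none of your separator-based plans can be completed as stated, and your final scheme inherits this defect.

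The idea you are missing is a dichotomy on the number of \emph{low}-degree vertices, not a threshold on high degree. Let $t$ be the number of vertices of degree at most $k$. The paper splits on whether $t \le 2(k+1)\sqrt{n}$.
\begin{itemize}
\item If $t$ is small, then the number $s$ of vertices of degree $\ge k+2$ is also small: from $\operatorname{mad}(G) < k+1$ one gets $\sum_{U}\deg(u_i) + \sum_{W}\deg(w_i) < (k+1)(s+t)$, and since each $u_i$ has degree $\ge k+2$ this forces $s < kt = O(k^2\sqrt{n})$. Moreover the same inequality bounds $\sum_U \deg(u_i) = O(k^3\sqrt{n})$, and AM--GM then gives $\prod_U \deg(u_i) \le (2(k+1))^{O(k^2\sqrt{n})}$. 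Now Lemma~\ref{lem:algorithm} applies directly to all of $G$ and already gives the desired bound.
\item If $t$ is large, greedily extract an independent set $S$ of at least $2\sqrt{n}$ vertices of degree $\le k$, recurse on $G-S$ (which still has $\operatorname{mad} < k+1$), and extend the recolouring sequence to $G$: whenever a neighbour of $v\in S$ is about to take $v$'s colour, recolour $v$ first---possible since $v$ has $\le k$ neighbours and $k+2$ colours. This costs a multiplicative factor of at most $k$ per level, and the recursion $n \mapsto n - 2\sqrt{n}$ has depth $O(\sqrt{n})$, so $f(n) \le k\cdot f(n-2\sqrt{n}) + 1 \le k^{O(k^2\sqrt{n})}$.
\end{itemize}
The point you did not see is that in a graph with $\operatorname{mad} < k+1$, \emph{few low-degree vertices forces few high-degree vertices} (the mass has to balance around $k+1$), which is exactly what makes the product in Lemma~\ref{lem:algorithm} small without any separator. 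Your thresholding at $D=\sqrt{n}$ went in the wrong direction: you controlled the \emph{count} of big vertices but not their degree sum, whereas the paper controls both via the low-degree count.
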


\begin{proof}
We shall prove by induction on the size of $n=|V(G)|$ that we can recolour $\alpha$ to a $(k+1)$-colouring of $G$ such that each vertex in $G$ is recoloured at most $k^{\bigO{k^2\sqrt{n}}}$ times, which implies the lemma. We note that our inductive proof can be easily transformed to a recursive algorithm running in time $k^{\bigO{k^2\sqrt{n}}}$. 

As the base case, we show how to recolour by at most $k^{\bigO{k^2\sqrt{|H|}}}$ recolourings any graph $H$ with $\text{mad}(H)<k+1$ that contains at most $(k+1)2\sqrt{|H|}$ vertices of degree at most $k$.
Let $H$ be a graph with $\text{mad}(H)<k+1$ that contains at most $(k+1)2\sqrt{|H|}$ vertices of degree at most $k$,
let $\alpha^H$ be a $(k+2)$-colouring of $H$, and let $h = |V(H)|$.

\begin{claim} We can recolour $\alpha^H$ to some $(k+1)$-colouring of $H$ by $k^{\bigO{k^2\sqrt{h}}}$ recolourings. 
\end{claim}

\begin{proof}[Proof of Claim.]
	Due to Lemma~\ref{lem:algorithm}, 
	we only need to show that $\prod_{i = 1}^s\deg(u_i)\le k^{\bigO{k^2\sqrt{h}}}$, where $U = \{u_1, \dots, u_s\}$ is the set of vertices of degree at least $k+2$ in $H$. Let $W = \{w_1, \dots, w_t\}$ be the set of
	vertices of degree less than or equal to $k$ in $H$, and let $Z = V(H) \setminus (U \cup W) = \{z_1, \dots, z_{h - s - t}\}$ be the set of vertices of degree precisely $k + 1$ in $H$. 
	We can assume that $H$ is connected, because otherwise we can prove the claim for each connected component of~$H$.

Since $H$ is connected, 
\begin{eqnarray}\label{eq:in}
\bigg(\sum_{i=1}^s \text{deg}(u_i)\bigg) + \bigg(\sum_{i=1}^t \text{deg}(w_i)\bigg) &\geq& \bigg(\sum_{i=1}^s \text{deg}(u_i)\bigg) + t.
\end{eqnarray}
On the other hand, since $\text{mad}(G) < k + 1$, 
\begin{eqnarray}\label{eq:first}
& & \bigg(\sum_{i=1}^s \text{deg}(u_i)\bigg) + \bigg(\sum_{i=1}^t \text{deg}(w_i)\bigg) + \bigg(\sum_{i = 1}^{h - s - t} \text{deg}(z_i)\bigg)  < (k + 1)h \nonumber \\ 
&\Longleftrightarrow& \bigg(\sum_{i=1}^s \text{deg}(u_i)\bigg) + \bigg(\sum_{i=1}^t \text{deg}(w_i)\bigg) + (k + 1)(h - s - t) < (k + 1)(h - s - t + s + t) \nonumber \\
&\Longleftrightarrow& \bigg(\sum_{i=1}^s \text{deg}(u_i)\bigg) + \bigg(\sum_{i=1}^t \text{deg}(w_i)\bigg) < (k + 1)(s + t). 
\end{eqnarray}
Combine Inequalities (\ref{eq:in}) and (\ref{eq:first}):
\begin{eqnarray*}
 \bigg(\sum_{i=1}^s \text{deg}(u_i)\bigg) + t < (k + 1)(s + t)
  \Longrightarrow  (k + 2)s + t < (k + 1)(s + t)
 \Longleftrightarrow s < kt. 
\end{eqnarray*}
Since, by assumption, $t < (k+1)2\sqrt{h}$, it follows that $s < k(k+1)2\sqrt{h}$. Substituting these bounds into $(k + 1)(s + t) - t$ gives us \begin{eqnarray}\label{eq:second}
\sum_{i=1}^s \text{deg}(u_i)< k(k+1)^22\sqrt{h}+k(k+1)2\sqrt{h}\le 4k(k+1)^2\sqrt{h} = a.
\end{eqnarray}

By the AM-GM inequality of arithmetic and geometric means, it holds that
\begin{eqnarray}\label{eq:third}
\frac{\sum_{i=1}^s \deg(u_i)}{s} \geq \bigg(\prod_{i = 1}^s \deg(u_i)\bigg)^{s^{-1}}.
\end{eqnarray}
Combining Inequalities~(\ref{eq:second})~and~(\ref{eq:third}) we get 
\begin{eqnarray}
\frac{a}{s} > \bigg(\prod_{i = 1}^s \text{deg}(u_i)\bigg)^{s^{-1}},\nonumber
\end{eqnarray}
or, since both sides of the inequality are positive, equivalently  
\begin{eqnarray}
f(s) = \bigg(\frac{a}{s}\bigg)^s > \prod_{i = 1}^s \text{deg}(u_i).\nonumber
\end{eqnarray}
It remains to find an upper bound for the expression $f(s)$ when $s$ is between $1$ and $k(k+1)2\sqrt{h}$. The derivative 
$f'(s) = f(s) / \partial s$ of $f(s)$ with respect to $s$ is given by \begin{eqnarray}\label{eq:derivative}
f'(s) = \bigg(\frac{a}{s}\bigg)^s\cdot \bigg(\log\bigg(\frac{a}{s}\bigg)-1\bigg)\nonumber,
\end{eqnarray}
and since $f'(s)$ is positive for each $s \in [1,  k(k+1)2\sqrt{h}]$, it follows that $f(s)$ is maximized when $s=k(k+1)2\sqrt{h}$. Therefore, we obtain

\begin{eqnarray}\label{eq:upperbounded}
\left(2(k+1)\right)^{k(k+1)2\sqrt{h}} > \prod_{i = 1}^s \text{deg}(u_i), \nonumber 
\end{eqnarray}
finishing the proof of the claim.
\end{proof}

For the inductive step, suppose that $G$ contains more that $(k+1)2\sqrt{n}$ vertices of degree at most $k$ and that we can recolour any subgraph $H$ of $G$ with $h<n$ vertices to some $(k+1)$-colouring $\gamma^H$ such that each vertex get recoloured at most $k^{\bigO{k^2\sqrt{h}}}$ times.
Let $S$ be an independent set in $G$ containing only vertices of degree at most $k$ of size at least $2\sqrt{n}$. Since $G$ can be greedily coloured with $(k+1)$ colours using its $k$-degenerate ordering and $G$ contains more that $(k+1)2\sqrt{n}$ vertices of degree at most $k$, such a set $S$ exists and can be found in polynomial time. By the inductive hypothesis we can recolour the graph $H=G-S$ to some $(k+1)$-colouring such that each vertex get recoloured at most $k^{ck^2\sqrt{h}}$ times for some constant $c>1$. We can extend this sequence of recolourings to a sequence in $G$ by recolouring a vertex $u \in S$ 
whenever some neighbour of $u$ gets recoloured to its colour (this is possible because the number of colours is $k + 2$ and $u$ has at most $k$ neighbours in $G$). 
At the end of the sequence, we can recolour each vertex of $S$ to a colour other than $k + 2$. It follows that the maxmimum number $f(n)$ of times a vertex of $G$ is recoloured satisfies the inequality
\[
f(n) \le k \cdot \Big(k^{ck^2\sqrt{h}} \Big) + 1\le k \cdot \Big(k^{ck^2\sqrt{n - 2\sqrt{n}}} \Big) + 1 = k^{ck^2\sqrt{n - 2\sqrt{n}} + 1}+1.
\]
Since $k\geq 2$, to show that $f(n)\le k^{ck^2\sqrt{n}}$,
it suffice to show that $ ck^2\sqrt{n} > ck^2\sqrt{n - 2\sqrt{n}} + 1$ for each $n\ge 4$. Adding $-1$ to both sides of the inequality and then squaring yields the~result. 
\end{proof}

 In our next lemma, we adapt the proof method introduced in \cite{feghali} to show that we can further reduce the number of colours by one for planar graphs. 

\begin{lemma}\label{lem:master}
Let $G$ be a planar graph, and let $\gamma$ be a $6$-colouring of $G$. Then we can recolour $\gamma$ to some $5$-colouring of $G$ using seven colours by at most $\bigO{n^c}$ recolourings for some constant $c > 1$.\end{lemma}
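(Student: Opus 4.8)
Being planar, $G$ is $5$-degenerate and, by the five colour theorem, $5$-colourable; we shall reconfigure $\gamma$ to a $5$-colouring of $G$ inside $R_7(G)$ using $\bigO{n^c}$ recolourings, which proves the lemma. Two observations provide the room we need: colour $7$ is unused by $\gamma$ and is not needed in the target, so it is available throughout as a buffer; and, by Euler's formula, every subgraph $H$ of $G$ on at least three vertices satisfies $\sum_{v\in V(H)}\deg(v)\le 6|V(H)|-12$. We will use this additive term $-12$ in place of the multiplicative slack $\epsilon$ in the hypothesis $\operatorname{mad}(H)<d-\epsilon$ of \cite{feghali}, which planar graphs do not satisfy for any fixed $\epsilon>0$.

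The plan is to adapt the peel-and-extend argument of \cite{feghali}, which shows that $R_{d+1}(H)$ has diameter polynomial in $|V(H)|$ whenever $\operatorname{mad}(H)<d-\epsilon$, and to proceed by induction on $n$. Using $5$-degeneracy, we peel off a vertex $v$ of degree at most $5$, apply the induction to the planar graph $G-v$, lift the resulting recolouring sequence to $G$ by recolouring $v$ whenever a neighbour of $v$ takes $v$'s current colour, and finally recolour $v$ into $\{1,\dots,5\}$ --- using colour $7$ transiently, and clearing it at the very end, should the direct move be blocked. Since $v$ has at most five neighbours while seven colours are available, $v$ always has at least two free colours, so each of these moves is legal, and --- in contrast to the proof of Lemma~\ref{lem:algorithm} --- no recursion into high-degree neighbours is ever triggered. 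The one quantity requiring care is the total length of the lifted sequence.

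The main obstacle is to get a polynomial, rather than exponential, bound on that length: a naive lift recolours $v$ once for every recolouring of a neighbour, and across the $n$ levels of the recursion this multiplies up to an exponential factor --- exactly the weakness of Lemmas~\ref{lem:algorithm} and \ref{lem:sub}. The \cite{feghali} method sidesteps this with an amortisation that bounds, uniformly for every vertex, the number of times it is recoloured over the \emph{entire} constructed sequence, independently of the degrees; summing over $V(G)$ then yields $\bigO{n^c}$. The real work is to run that amortisation in the boundary regime $\operatorname{mad}(G)<6$, feeding in the additive Euler deficiency and the spare buffer colour in place of the missing $\epsilon$. Once this is done, the recursion resolves to $\bigO{n^c}$ for some constant $c>1$, as asserted.
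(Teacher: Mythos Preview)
Your outline correctly identifies the obstacle --- the naive lift blows up exponentially --- but the proposed fix does not go through. The amortisation in \cite{feghali} needs, at each level, an independent set of low-degree vertices whose removal shrinks the graph by a \emph{linear} fraction; that is precisely what the hypothesis $\operatorname{mad}<d-\epsilon$ buys. The additive Euler slack $-12$ does not provide this for degree $\le 5$: in a triangulation one has $\sum_v(6-\deg v)=12$, so there can be as few as twelve vertices of degree at most $5$, regardless of $n$. Peeling only degree-$\le 5$ vertices therefore cannot reduce $n$ by a constant factor per level, and your recursion still has depth $\Theta(n)$, reproducing the exponential bound you were trying to avoid. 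Saying ``feed in the additive deficiency in place of $\epsilon$'' is exactly the step that fails.

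What the paper does instead is peel an independent set of vertices of degree at most~$6$ --- those \emph{do} constitute a linear fraction of any planar graph --- and this forces two further ideas your sketch does not contain. First, a peeled vertex $v$ of degree~$6$ may see all six colours on its neighbours, so one cannot simply delete it and hope to recolour it later; the paper deletes $v$ and \emph{identifies} two of its neighbours that $\gamma$ colours alike, yielding a smaller planar graph on which to recurse. Second, after lifting the sequence back to $G$, a vertex $v\in I$ can still be stuck on colour~$6$ with all of $1,\dots,5$ on its (effectively five) neighbours; the paper clears this with a Kempe-chain swap using colour~$7$ as a buffer, exactly as in the proof of the Five Colour Theorem. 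With these ingredients the per-vertex recurrence becomes $f(n)\le 5\,f\bigl(n-n/49\bigr)+\bigO{n}$, which the master theorem resolves to $\bigO{n^c}$.
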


\begin{proof}
Let $H$ be any subgraph of $G$, and let $h=|V(H)|$. 
An independent set $I$ of $H$ is said to be \emph{special} if it contains at least $h/49$ vertices and every vertex of $I$ has at most $6$ neighbours in $G - I$. 
Let $S$ be the set of vertices of $H$ of degree at most $6$. Then $S$ has at least $h/7$ vertices since otherwise
\[\sum_{v\in V(H)}\textrm{deg}(v) \geq \sum_{v\in V(H) - S}\textrm{deg}(v) > 7\bigg(h - \frac{h}{7}\bigg) = 6h,\]
which contradicts that $\textrm{mad}(G) < 6$. Let $I \subseteq S$ be a maximal independent subset of $S$. Each vertex of $I$ has at most $6$ neighbours in $S$ and every vertex of $S - I$ has at least one neighbour in $I$. Therefore, $|I| + 6|I| \geq |S|$ and so $I$ is a special independent set as needed.

Let us prove by induction on the order of $G$ that there is a sequence of recolourings from a $6$-colouring $\gamma$ of $G$ to 
some $5$-colouring of $G$. We will then argue that at most $\bigO{n^c}$ recolourings have been performed for some constant $c > 1$, thereby finishing the proof.

Let $I$ be a special independent set of $G$, and let $G^*$ be the graph obtained from $G$ by
\begin{itemize}
\item removing all vertices of degree $5$ in $I$ from $G$ and
\item for each vertex $v$ in $I$ of degree $6$, deleting $v$ and identifying a pair of neighbours of $v$ that are coloured alike in $\gamma$ (such a pair always exists since at most $6$ colours appear on $v$ and its neighbours).
\end{itemize}
Notice that $G^*$ is planar (one can think of some embedding of $G$ in the plane and then note that  the neighbours of any vertex $v$ form part of the boundary of a face $F$ in $G - v$; thus, indentifying a pair of neighbours of $v$ inside the interior of $F$ in the graph $G - v$ does not break the planarity).

Let $\gamma'$ denote the colouring of $G^*$ that agrees with $\gamma$ on $V(G^*) \cap  V(G)$ and such that, for each $z \in V(G^*) \setminus V(G)$, if $z$ is the vertex obtained by the identification of 
some vertices $x$ and $y$ of $G$, then $\gamma'(z) = \gamma(x) (=\gamma(y))$. Graph $G^*$ has less vertices than $G$, so can we apply our induction hypothesis to find a sequence of recolourings from $\gamma'$ 
to some $5$-colouring $\gamma''$ of~$G^*$. 

We let $\gamma^{\star}$ be the $5$-colouring of $G - I$ that agrees with $\gamma''$ on $V(G) \cap  V(G^*)$ and such that, for each pair of vertices $x, y \in V(G)$ identified into a new vertex $z$,
 $\gamma^{\star}(z) = \gamma''(x) =\gamma''(y)$. 
We can transform $\gamma'$ to $\gamma^{\star}$ by
\begin{itemize}
\item recolouring $x$ and $y$ using the same recolouring as $z$  for every pair $x, y \in V(G) - I$ identified into a vertex $z \in V(G^*)$;
\item recolouring each $v \in V(G^*) \cap V(G)$ using the same recolouring.
\end{itemize}
 
 We can extend this sequence to $G$ by recolouring each vertex of $I$ to a colour from $\{1, \dots, 7\}$ not appearing on it or its neighbours (this is possible since each vertex of $I$ either has degree at most $5$ or has degree $6$ but with at least two neighbours that are in some sense always coloured alike). At the end of this sequence, we recolour each vertex of $I$ of colour $7$ to another colour (this is again possible by the same reasoning). So our aim of transforming into a $5$-colouring is achieved unless some vertex of $I$ has colour $6$. 

Suppose that there is a vertex $v$ of $I$ with colour $6$. We emulate the proof of the $5$-Colour Theorem to show that we can recolour $v$ to a colour from $\{1, \dots, 5\}$ without introducing \emph{new} vertices of colour $6$ or $7$. By repeating the same procedure at most $|I|$ times, we can transform $\gamma$ into a $5$-colouring of $G$, as needed. For this, we require some definitions.    

Let $i$ and $j$ be two colours. Then a component $C$ of a subgraph of $G$ induced by colours $i$ and $j$ is called an $(i,j)$-component. Suppose that~$C$ is an $(i, j)$-component, $7 \notin \{i,j\}$. Then colours $i$ and $j$ are said to be \emph{swapped} on $C$ if the vertices coloured $j$ are recoloured $7$,  then the vertices coloured $i$ are recoloured $j$, and finally the vertices initially coloured $j$ are recoloured $i$. Since  no vertex coloured $i$ or $j$ in $C$ is adjacent to a vertex of colour $7$, it is clear that each colouring is proper and that no new vertices of colour $7$ are introduced. 

If for a vertex $v$ at least one colour in $\{1,\ldots,5\}$ does not appear on its neighbour, we can immediately recolour $v$.
So we can assume that $v$ has either degree $5$ or $6$ with precisely five neighbours $v_1, \dots, v_5$ coloured distinctly. Suppose these neighbours appear in this order in a plane embedding of $G$. Let us denote by $i$ the colour of $v_i$ ($i = 1, \dots, 5$). If the $(1, 3)$-component $C_{1, 3}$ that contains $v_1$ does not contain $v_3$, we swap colours $1$, $3$ on $C_{1, 3}$ (this is possible since colour $7$ is not used on $G$), which in turn allows us to recolour $v$ to $1$. So we can assume that $C_{1, 3}$ contains both $v_1$ and $v_3$. In the same vein, the vertices $v_2$ and $v_4$ must be contained in the same $(2,4)$-component $C_{2, 4}$. By the Jordan Curve Theorem, this is impossible. Hence, either $C_{1,3}$ does not contain both $v_1$ and $v_3$ or $C_{2,4}$ does not contain both $v_2$ and $v_4$ and we are able to recolour $v$, as required.  

Let us now estimate the number of recolourings of a vertex $v \in I$ in terms of the number of recolourings of vertices of $G - I$.  When recolouring~$\gamma$ to a $6$-colouring $\beta$ that uses only colours $1$ to $5$ on $G - I$, $v$ is recoloured at most five more times than any of its neighbours  (this bound is achieved if $v$ is recoloured every time one of its neighbours is recoloured and these neighbours are recoloured the same number of times). Moreover, recolouring $\beta$ to a $5$-colouring of $G$ has cost an additional $\bigO{n}$ recolourings per vertex. Therefore, the maximum number $f(n)$ of recolourings per vertex satisfies the recurrence relation
\[f(n) \leq 5 \cdot f\bigg(n - \frac{n}{49}\bigg) + \bigO{n},\]
and the theorem follows by the master theorem. 
\end{proof}

We also require some auxiliary results whose algorithmic versions (running in polynomial time) is implicit in the respective papers. 

\begin{lemma}[\cite{luisthesis}]\label{lem:ce}
Let $d$ and $k$ be positive integers, $k \geq 2d + 1$, and let $G$ be a $d$-degenerate graph on $n$ vertices. Then $R_k(G)$ has diameter $\bigO{n^2}$. 
\end{lemma}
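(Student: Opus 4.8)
The plan is to prove Lemma~\ref{lem:ce} by a direct induction on the $d$-degeneracy ordering, following Cereceda's original argument. Fix a $d$-degenerate ordering $v_1, \dots, v_n$ of $G$, so each $v_i$ has at most $d$ neighbours among $v_1, \dots, v_{i-1}$. Given two $k$-colourings $\alpha$ and $\beta$ of $G$, I will recolour $\alpha$ into $\beta$ by processing the vertices in increasing order of index, maintaining the invariant that once $v_i$ has been ``fixed'' it receives colour $\beta(v_i)$ and is never touched again. The key point that makes this work is the hypothesis $k \geq 2d+1$: when we come to fix $v_i$, its already-fixed neighbours (the at most $d$ lower-indexed ones) occupy at most $d$ colours, so at least $k - d \geq d+1$ colours are available for $v_i$ and its not-yet-fixed neighbours.

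The core subroutine is: given a target colour $c$ for a vertex $v_i$ whose lower-indexed neighbours are already fixed, recolour $v_i$ to $c$ while touching only vertices of index $\geq i$. If $c$ is not currently used on any neighbour of $v_i$, recolour $v_i$ directly. Otherwise, for each neighbour $w$ of $v_i$ currently coloured $c$ (necessarily $w$ has higher index, hence is not fixed), we must first recolour $w$ to some colour not on $v_i$ and not on $w$'s neighbours other than $v_i$; since $v_i$ together with $w$'s other neighbours — wait, we need a colour avoiding all of $w$'s neighbours and avoiding creating a new conflict. Here I would instead argue more carefully by a secondary recursion on the index: to clear colour $c$ from $v_i$'s neighbourhood, recursively recolour each such neighbour $w$ to a colour distinct from $c$ and from the colours currently on $w$'s \emph{other} neighbours; because $k \geq 2d+1 > d+1$ such a colour exists (fewer than $k$ colours are forbidden, in fact at most $d+1$: colour $c$ plus the $\leq d$ colours on the remaining neighbours of $w$, but one may also need to recurse if that colour is in turn blocked). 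The recursion is well-founded because each recursive call is on a strictly higher index and there are finitely many vertices; a clean way to bound it is to process vertices from highest index downward when clearing, so no recursion is needed at all for the clearing step — each higher-indexed neighbour can be parked on a spare colour once and for all.

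With the subroutine in hand, the diameter bound follows by the standard counting argument: fixing vertex $v_i$ to colour $\beta(v_i)$ costs at most a number of recolourings that depends only on $d$ and $k$ (a constant, since each of the $\leq d$ blocking neighbours is recoloured $O(1)$ times) — actually one must be slightly careful because clearing a neighbour may disturb it, but since after fixing $v_1, \dots, v_i$ we never touch them again, the total work to fix $v_i$ is $O_d(1)$ per vertex below index $i$ that needs parking, giving $O(n)$ recolourings to fix each $v_i$ in the worst case, and hence $O(n^2)$ recolourings in total to get from $\alpha$ to $\beta$. Thus $R_k(G)$ has diameter $\bigO{n^2}$.

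The main obstacle I expect is making the clearing/parking step fully rigorous without circular recursion: when we want to vacate colour $c$ from the neighbourhood of $v_i$, the neighbours we recolour might themselves have high-degree and their ``safe'' colours might be occupied, so a naive recursion could loop. The cleanest fix — and the one I would write up — is to exploit the degeneracy ordering asymmetrically: when fixing $v_i$, handle its higher-indexed neighbours in \emph{decreasing} order of index, so that when we recolour such a neighbour $w$ the only fixed vertices adjacent to $w$ are lower-indexed than $w$ (at most $d$ of them) and $v_i$ itself, leaving $\geq k - d - 1 \geq d \geq 1$ colours free; no further recursion is triggered because we only ever move $w$ to a colour that is simultaneously safe against all of $w$'s neighbours. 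Verifying that $k \geq 2d+1$ is exactly the slack needed here, and that the per-vertex recolouring count stays bounded, is the crux; everything else is the routine $O(n^2)$ bookkeeping.
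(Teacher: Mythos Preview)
The paper does not prove this lemma at all; it is quoted from Cereceda's thesis as an auxiliary black box, so there is no ``paper's own proof'' to compare against. I will therefore just assess your sketch on its own merits.

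Your overall strategy---fix a $d$-degeneracy ordering $v_1,\dots,v_n$, process vertices in increasing order, and at step $i$ set $v_i$ to $\beta(v_i)$ after clearing that colour from the not-yet-fixed neighbours---is indeed the standard one. You also correctly note that the already-fixed lower-indexed neighbours (at most $d$ of them) never block $\beta(v_i)$, since $\beta$ is proper. The gap is in the clearing step. You assert that a higher-indexed blocker $w$ can be moved in one shot to ``a colour that is simultaneously safe against all of $w$'s neighbours'', and justify this with the count ``$\geq k-d-1 \geq d \geq 1$ colours free''. But that count only accounts for $w$'s at most $d$ \emph{lower}-indexed neighbours plus the one colour $\beta(v_i)$; it completely ignores $w$'s \emph{higher}-indexed neighbours. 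In a $d$-degenerate graph the forward degree of $w$ is unbounded, so all $k$ colours may appear in $N(w)$ and there need not be any directly safe colour for $w$. Processing the blockers in decreasing index order does not help: the colouring must stay proper at every instant, so ``not yet fixed'' neighbours of $w$ constrain its colour just as much as fixed ones do. (The same confusion shows up earlier when you write ``each of the $\leq d$ blocking neighbours''---the $\leq d$ bound is on the \emph{backward} degree of $v_i$, whereas the blockers are forward neighbours, of which there can be many.)

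The usual remedy is precisely the recursion you tried to suppress: to move $w$ off colour $\beta(v_i)$ you pick a target $c'$ for $w$ that avoids $c=\beta(v_i)$ and $w$'s $\leq d$ lower-indexed neighbours, then recursively clear $c'$ from $w$'s forward neighbourhood before recolouring $w$. Termination is immediate since indices strictly increase along the recursion, but bounding the total cost of one outer step by $O(n)$---and hence the overall diameter by $O(n^2)$---needs a genuine inductive argument, and it is exactly there that the slack $k \ge 2d+1$ (rather than merely $k \ge d+2$) is exploited. Your write-up does not provide that analysis, so as it stands the proof does not go through.
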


\begin{lemma}[\cite{thomassen}]\label{lem:planarpartition}
Let $G = (V, E)$ be a planar graph. There is a partition $V = I \cup D$ such that $G[I]$ is an independent set and $G[D]$ is a $3$-degenerate graph.  
\end{lemma}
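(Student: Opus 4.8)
The plan is to prove the statement by contradiction, taking $G$ to be a counterexample with the fewest vertices and exploiting the hereditary characterisation of $3$-degeneracy: a graph is $3$-degenerate if and only if it has no subgraph of minimum degree at least $4$, equivalently its vertices admit an ordering in which each has at most three earlier neighbours. So the goal is to label $V(G)$ with $I$ and $D$ so that $I$ is independent and $G[D]$ admits such a deletion order, and to show that this is always possible.

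First I would dispose of low-degree vertices. If $G$ has a vertex $v$ of degree at most $3$, apply minimality to $G - v$ to obtain a partition $V(G-v) = I' \cup D'$; then setting $I = I'$ and $D = D' \cup \{v\}$ succeeds, since in $G[D]$ the vertex $v$ has degree at most $3$ and may be deleted first, after which $G[D']$ is $3$-degenerate and $I=I'$ is still independent. Hence a minimal counterexample $G$ has minimum degree at least $4$, and being planar it must contain a vertex of degree $4$ or $5$.

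The difficulty lies precisely with these degree-$4$ and degree-$5$ vertices. For such a vertex $v$ there is a genuine tension: placing $v$ in $I$ forces all of its neighbours into $D$, which may destroy $3$-degeneracy, whereas placing $v$ in $D$ requires that $v$ be removable, i.e.\ that enough of its neighbours already lie in $I$. A natural reduction, borrowing the identification idea from the proof of Lemma~\ref{lem:master}, is to pick two non-adjacent neighbours $v_1, v_3$ of $v$ (these exist because the neighbourhood of a degree-$4$ or degree-$5$ vertex of a planar graph cannot be a clique, and two non-consecutive neighbours can be identified across a face while preserving planarity), delete $v$, identify $v_1$ and $v_3$ into a single vertex $w$, and recurse on the smaller planar graph. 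When $w$ lands in the independent part this lifts cleanly: we put $v_1, v_3$ into $I$ and $v$ into $D$, whereupon $v$ has at most three $D$-neighbours and can be deleted first. The obstacle is the case where $w$ lands in the degenerate part, since splitting $w$ back into $v_1$ and $v_3$ can raise the degeneracy of $G[D]$ above $3$, so the naive reduction breaks down.

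To overcome this I would pass to a discharging argument driven by Euler's formula, for instance assigning charge $\deg(v)-6$ to each vertex and $2\deg(f)-6$ to each face, whose total is $-12$. The aim is to design discharging rules together with a list of \emph{reducible configurations} --- adjacent vertices of small degree, or degree-$4$/$5$ vertices with a prescribed local neighbourhood --- each of which can be shown, by extensions of the deletion-order bookkeeping used above, to be extendable from a partition of a smaller graph, hence absent from a minimal counterexample. Deriving from the charge count that some such configuration must nonetheless be present yields the contradiction. I expect the main obstacle to be assembling a \emph{complete} list of reducible configurations and verifying the discharging, the technical heart being the argument that the lifted deletion order for $D$ remains valid --- exactly the point at which the simple identification above fails.
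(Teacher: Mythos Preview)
The paper does not prove this lemma at all: it is quoted verbatim as a known result of Thomassen and used as a black box, so there is no ``paper's own proof'' to compare against. What you have written is therefore not competing with anything in the present paper; it is an attempted reconstruction of Thomassen's theorem itself.

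As a reconstruction, your proposal is honest about being incomplete, and the gap you flag is real. The minimal-counterexample-plus-identification manoeuvre genuinely breaks when the identified vertex $w$ lands in $D$, and you do not give a way to repair it; you instead defer to an unspecified discharging argument with an unspecified list of reducible configurations. That is not a proof sketch so much as a hope that a proof of this shape exists. I am not aware of a discharging proof of this particular decomposition theorem, and the difficulty you isolate --- that splitting a $D$-vertex back into two can raise the degeneracy --- does not obviously dissolve under any local reducibility analysis, because $3$-degeneracy is a global property of $G[D]$.

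For context, Thomassen's actual argument takes a quite different route: rather than discharging, he strengthens the statement to an extension lemma on near-triangulations with prescribed labels on part of the outer boundary (in the spirit of his $5$-list-colouring proof), and then runs a direct induction by peeling off boundary vertices. The strengthened hypothesis is what makes the induction go through and sidesteps exactly the obstruction you encountered with the identified vertex. If you want to supply a self-contained proof here, that is the template to follow; otherwise, citing Thomassen as the present paper does is the appropriate move.
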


\begin{lemma}[\cite{mihok, wood}]\label{lem:degpartition}
Let $k$ be a positive integer, and let $G = (V, E)$ be a $k$-degenerate graph. There is a partition $V = I \cup F$ such that $G[I]$ is an independent set and $G[F]$ is  a $(k - 1)$-degenerate graph.
\end{lemma}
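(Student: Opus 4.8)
The plan is to build the partition by a single greedy sweep along a degeneracy ordering of $G$. First I would fix an ordering $v_1, \dots, v_n$ of $V(G)$ in which every $v_i$ has at most $k$ neighbours among $v_1, \dots, v_{i-1}$; I will call those its \emph{back-neighbours}. Processing the vertices in the order $v_1, \dots, v_n$, I maintain a set $I$, initially empty, and place $v_i$ into $I$ if $v_i$ has no back-neighbour currently in $I$, and into $F := V \setminus I$ otherwise. (Equivalently, $I$ is the maximal independent set obtained by greedily scanning this ordering.)

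Correctness rests on two observations that I would check. For independence of $I$: if $v_i$ is put into $I$ then it has no back-neighbour in $I$ at that moment, and if $v_j$ with $j > i$ is a neighbour of such a $v_i$, then when $v_j$ is processed it sees $v_i \in I$ among its back-neighbours (since $I$ only grows), so $v_j$ is placed in $F$; hence no edge has both endpoints in $I$. For degeneracy of $G[F]$: take the restriction of $v_1, \dots, v_n$ to $F$ as a witnessing ordering. A vertex $v_i \in F$ was put in $F$ exactly because at least one of its (at most $k$) back-neighbours lay in $I$, and that back-neighbour stays in $I$; therefore at most $k - 1$ of the back-neighbours of $v_i$ lie in $F$, which is precisely the number of neighbours of $v_i$ preceding it in the induced ordering of $F$. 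So $G[F]$ is $(k-1)$-degenerate.

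I do not anticipate a real obstacle: the argument is elementary, and the only points needing a moment's care are that ``$v_i$ has a back-neighbour in $I$'' must be evaluated at the time $v_i$ is processed (which is harmless because $I$ is monotone increasing, so this quantity agrees with the final $I$ restricted to $\{v_1, \dots, v_{i-1}\}$) and the boundary cases — $v_1$ is always placed in $I$, and for $k = 1$ every vertex of $F$ ends up with $0$ back-neighbours inside $F$, so $G[F]$ is edgeless, as required for a $0$-degenerate graph. Finally, since the whole procedure is one left-to-right pass over a degeneracy ordering, it runs in polynomial (indeed linear, once the ordering is given) time, which furnishes the algorithmic version of the lemma alluded to just before its statement.
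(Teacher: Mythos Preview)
Your argument is correct: the greedy left-to-right sweep along a $k$-degeneracy ordering, placing $v_i$ into $I$ whenever none of its back-neighbours are already in $I$, yields an independent set $I$ (any later neighbour of a vertex in $I$ sees it as a back-neighbour and so goes to $F$), and each $v_i\in F$ loses at least one of its $\le k$ back-neighbours to $I$, so the induced ordering on $F$ witnesses $(k-1)$-degeneracy. The boundary case $k=1$ and the monotonicity remark are handled correctly, and the linear-time algorithmic version is immediate.

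There is nothing to compare against: the paper does not prove this lemma but quotes it from \cite{mihok, wood} (see the sentence preceding Lemma~\ref{lem:degpartition}). Your proof is precisely the standard one found in those references, so it is entirely in line with what the paper is invoking.
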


We combine Lemmas \ref{lem:planarpartition} and \ref{lem:degpartition} to obtain the following corollary. 

\begin{corollary}\label{cor:partition}
Let $G = (V, E)$ be a planar graph. Then there is a partition $V = I_1 \cup I_2 \cup A$ such that $G[I_1]$ and $G[I_2]$ are independent sets and $G[A]$ is a $2$-degenerate graph.
\end{corollary}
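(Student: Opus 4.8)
The plan is simply to compose the two partition lemmas. First I would invoke Lemma~\ref{lem:planarpartition} on the planar graph $G$ to obtain a partition $V = I \cup D$ with $G[I]$ an independent set and $G[D]$ a $3$-degenerate graph. Since $k = 3$ is a positive integer, I can then apply Lemma~\ref{lem:degpartition} to the $3$-degenerate graph $G[D]$ to obtain a partition $D = I' \cup F$ such that $G[D][I']$ is an independent set and $G[D][F]$ is a $2$-degenerate graph. Setting $I_1 = I$, $I_2 = I'$, and $A = F$ yields a partition $V = I_1 \cup I_2 \cup A$, and it remains to check that it has the claimed properties.

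The key observation is that induced subgraphs behave well under nesting: for any $X \subseteq D \subseteq V$ we have $G[D][X] = G[X]$. Hence $G[I_2] = G[D][I']$ is an independent set (as a subgraph of $G$, not merely of $G[D]$), and $G[A] = G[D][F]$ is $2$-degenerate. Together with the fact that $G[I_1] = G[I]$ is independent by Lemma~\ref{lem:planarpartition}, all three conditions are met.

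I do not expect any genuine obstacle here; the only point requiring a moment of care is precisely the remark above, namely that "independent set" and "$2$-degenerate" are hereditary properties and are not affected by passing from $G[D]$ to $G$, so the partition produced inside $G[D]$ transfers verbatim to a partition inside $G$. No further estimates or constructions are needed.
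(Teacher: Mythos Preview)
Your proposal is correct and matches the paper's approach exactly: the paper simply states that the corollary follows by combining Lemmas~\ref{lem:planarpartition} and~\ref{lem:degpartition}, which is precisely the composition you carry out. Your extra remark that $G[D][X]=G[X]$ for $X\subseteq D$ is the only detail worth checking, and it is straightforward.
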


We are now ready to prove Theorem \ref{thm:planar}. 

\begin{proof}[Proof of Theorem \ref{thm:planar}]
Let $\alpha$ and $\beta$ be two $7$-colourings of $G$. To prove the theorem, it suffices to show that we can recolour $\alpha$ to $\beta$ by $2^{\bigO{\sqrt{n}}}$ recolourings. Combining Lemmas \ref{lem:sub} and \ref{lem:master}, we can recolour $\alpha$ to some $5$-colouring $\gamma_1$ of $G$ and $\beta$ to some $5$-colourings $\gamma_2$ by $2^{\bigO{\sqrt{n}}}$ recolourings. We apply Corollary \ref{cor:partition} to find a partition $V = I_1 \cup I_2 \cup A$ such that $G[I_1]$ and $G[I_2]$ are independent sets and $H = G[A]$ is a $2$-degenerate graph.
From $\gamma_1$ and $\gamma_2$ we recolour the vertices in $I_1$ to colour $7$ and those in $I_2$ to colour $6$ (the colours that are not used in either $\gamma_1$ or $\gamma_2$). Let $\gamma_1^H$ and $\gamma_2^H$ denote, respectively, the restrictions of $\gamma_1$ and $\gamma_2$ to~$H$. 
 We focus on $H$ and as long as we do not use colours $6$ and $7$ we can recolour $\gamma_1^H$ to $\gamma_2^H$ without worrying about adjacencies between $A$ and $I_1 \cup I_2$.   Since $H$ is $2$-degenerate, we can apply Lemma \ref{lem:ce} with $k = 5$ and $d = 2$ to find by $\bigO{n^2}$ recolourings a recolouring sequence from $\gamma_1^H$ to $\gamma_2^H$. This completes the proof of the theorem. \end{proof}
 
\section{Final remarks}

The reader may have observed from the proof of Theorem \ref{thm:planar} that, in order to settle \cite[Conjecture 16]{bonamy13},  it would suffice  to show that we can recolour any $7$-colouring of a planar graph to some $6$-colouring by polynomially many recolourings. 

\begin{problem}
Given a planar graph $G$ and a $7$-colouring $\alpha$ of $G$, can we recolour $\alpha$ to some $6$-colouring of $G$ by $O(n^c)$ recolourings for some constant $c > 0$?
\end{problem}

In order to obtain a sub-exponential bound on the diameter of reconfiguration graphs of colourings for graphs with any bounded maximum average degree, it would suffice to find a positive answer to the following problem. (The proof of this fact follows by combining Lemma \ref{lem:sub} with an affirmative answer to Problem \ref{prob:2} in the same way that Lemmas 8, 9 and 10 in \cite{feghalibrooks} are combined to obtain Theorem 6 in \cite{feghalibrooks}.)

\begin{problem}\label{prob:2}
Let $k\geq 2$, and let $G = (V, E)$ be a graph with $\textrm{mad}(G) < k$. Then there exists a partition $\{V_1, V_2\}$ of $G$ such that $G[V_1]$ is an independent set and $\textrm{mad}(G[V_2]) < k - 1$. 
\end{problem}

\section*{Acknowledgements}

Eduard Eiben was supported by Pareto-Optimal Parameterized Algorithms (ERC Starting Grant 715744).
Carl Feghali was supported by the research Council of Norway via the project CLASSIS.

 \bibliography{bibliography}{}

\begin{thebibliography}{10}

\bibitem{asplund}
J.~Asplund, K.~Edoh, R.~Haas, Y.~Hristova, B.~Novick, and B.~Werner.
\newblock Reconfiguration graphs of shortest paths.
\newblock {\em Discrete Mathematics}, 341(10):2938--2948, 2018.

\bibitem{beier}
J.~Beier, J.~Fierson, R.~Haas, H.~M. Russell, and K.~Shavo.
\newblock Classifying coloring graphs.
\newblock {\em Discrete Mathematics}, 339(8):2100--2112, 2016.

\bibitem{bonamy13}
M.~Bonamy and N.~Bousquet.
\newblock Recoloring graphs via tree decompositions.
\newblock {\em European Journal of Combinatorics}, 69:200--213, 2018.

\bibitem{BJLPP14}
M.~Bonamy, M.~Johnson, I.~M. Lignos, V.~Patel, and D.~Paulusma.
\newblock Reconfiguration graphs for vertex colourings of chordal and chordal
  bipartite graphs.
\newblock {\em Journal of Combinatorial Optimization}, 27:132--143, 2014.

\bibitem{BC09}
P.~Bonsma and L.~Cereceda.
\newblock Finding paths between graph colourings: {PSPACE}-completeness and
  superpolynomial distances.
\newblock {\em Theoretical Computer Science}, 410:5215--5226, 2009.

\bibitem{bousquet11}
N.~Bousquet and G.~Perarnau.
\newblock Fast recoloring of sparse graphs.
\newblock {\em European Journal of Combinatorics}, 52:1--11, 2016.

\bibitem{luisthesis}
L.~Cereceda.
\newblock {\em Mixing graph colourings}.
\newblock PhD thesis, London School of Economics, 2007.

\bibitem{CHJ06}
L.~Cereceda, J.~van~den Heuvel, and M.~Johnson.
\newblock Connectedness of the graph of vertex-colourings.
\newblock {\em Discrete Mathematics}, 308:913--919, 2008.

\bibitem{CHJ06a}
L.~Cereceda, J.~van~den Heuvel, and M.~Johnson.
\newblock Mixing 3-colourings in bipartite graphs.
\newblock {\em European Journal of Combinatorics}, 30(7):1593--1606, 2009.

\bibitem{CHJ06b}
L.~Cereceda, J.~van~den Heuvel, and M.~Johnson.
\newblock Finding paths between 3-colourings.
\newblock {\em Journal of Graph Theory}, 67(1):69--82, 2011.

\bibitem{feghali}
C.~Feghali.
\newblock Paths between colourings of sparse graphs.
\newblock {\em European Journal of Combinatorics}, to appear.

\bibitem{feghalibrooks}
C.~Feghali, M.~Johnson, and D.~Paulusma.
\newblock A reconfigurations analogue of {B}rooks' theorem and its
  consequences.
\newblock {\em Journal of Graph Theory}, 83(4):340--358, 2016.

\bibitem{jerrum}
M.~Jerrum.
\newblock A very simple algorithm for estimating the number of {$k$}-colorings
  of a low-degree graph.
\newblock {\em Random Structures {\&} Algorithms}, 7(2):157--165, 1995.

\bibitem{JKKPP14}
M.~Johnson, D.~Kratsch, S.~Kratsch, V.~Patel, and D.~Paulusma.
\newblock Finding shortest paths between graph colourings.
\newblock {\em Algorithmica}, pages 1--27, 2015.

\bibitem{mihok}
P.~Mih{\'o}k.
\newblock Minimal reducible bounds for the class of $k$-degenerate graphs.
\newblock {\em Discrete Mathematics}, 236(1-3):273 -- 279, 2001.

\bibitem{nishimura}
N.~Nishimura.
\newblock Introduction to reconfiguration.
\newblock {\em Algorithms}, 11(4):52, 2018.

\bibitem{thomassen}
C.~Thomassen.
\newblock Decomposing a planar graph into an independent set and a 3-degenerate
  graph.
\newblock {\em Journal of Combinatorial Theory, Series B}, 83(2):262--271,
  2001.

\bibitem{wood}
D.~R. Wood.
\newblock Acyclic, star and oriented colourings of graph subdivisions.
\newblock {\em Discrete Mathematics \& Theoretical Computer Science},
  7(1):37--50, 2005.

\end{thebibliography}
\bibliographystyle{abbrv}
 
\end{document}